\documentclass[10pt,showpacs,floatfix,letterpaper,aps,pra,reprint, showkeys]{revtex4-1}
\usepackage{florjanc}
\usepackage{appendix}

\newenvironment{lemmaproof}[1]
  {\innerlemmaproof}
  {\endinnerlemmaproof}

\begin{document}
\title{In-situ Adaptive Encoding for \\ Asymmetric Quantum Error Correcting Codes}
\author{Jan Florjanczyk} \email{florjanc@usc.edu}
\author{Todd A. Brun} \email{tbrun@usc.edu}
\affiliation{Center for Quantum Information Science and Technology, \\
Communication Sciences Institute, Department of Electrical Engineering, \\
University of Southern California, Los Angeles, CA 90089, USA.}
\date{\today}

\pacs{03.67.Pp, 03.67.Lx, 03.65.Wj}
\keywords{quantum error correction, parameter estimation, asymmetric quantum codes}

\begin{abstract}
We present techniques that improve the performance of asymmetric stabilizer codes in the presence of unital channels with unknown parameters. Our method estimates the channel parameters using information recovered from syndrome measurements during standard stabilizer quantum error correction and adaptively realigns the codespace to minimize the uncorrectable error rate. We find that for dephasing channels parametrized by a single angle our scheme yields lifetimes dominated by the bit-flip error rate for which the asymmetric code has an improved distance. In the case of general unital channels we are able to learn and exploit orientations of the channel that yield a constant improvement to the code lifetime. In both cases, since our method is adaptive and online, we are able to model the effect of drift in the channel parameters.
\end{abstract}
\maketitle

\section{Introduction}

The last two decades of research in quantum computing have yielded remarkable advances in quantum error correction and fault-tolerant quantum computing. Error correction is a necessary component for any quantum device exposed to an environment which, by necessity of interacting with the quantum system, will introduce noise and decoherence. Error correction is considered successful when the action of the environment is minimized in some way, either by encoding information in a subspace (as is the case with decoherence-free subspace methods~\cite{dfs}), or by actively countering it via recovery operations as is the case with stabilizer quantum error correction~\cite{gottesman}.

An $[[n, k, d]]$ quantum stabilizer code protects $k$ logical qubits by encoding them into $n$ physical qubits. Let $\mcal{P}_n$ be the group of all tensor products of Pauli operators $X$, $Y$, and $Z$ and the identity $I$ on $n$ qubits (possibly with overall factors of $\pm 1$ or $\pm i$).  The code itself is defined by an Abelian subgroup $S \subset \mcal{P}_n$ such that $S$ is generated by a set of $n-k$ commuting \emph{stabilizer generators} $\lb g_s \rb$ and $S$  does not contain $-I$. We say a quantum state $\ket{\psi}$ is in the \emph{codespace} of the code if $g_s \ket{\psi} = \ket{\psi}$ for all $g_s$. A quantum state can accumulate an error by interacting with the environment. By measuring all of the stabilizer generators $g_s$, we can project the system to a state $E \ket{\psi}$ , where $E$ is an element of $\mcal{P}_n$.  The results of these ${n-k}$ measurements collectively form the \emph{syndrome} of $E$. Many different errors $E$ can result in the same syndrome and we denote by $d$ the distance of the code if it can unambiguously correct errors of weight $\lfloor (d-1)/2 \rfloor$ where the \emph{weight} of the error is the number of non-identity elements in $E \in \mcal{P}_n$.  For a general noise process, the state may be projected into a mixture of states with the same error syndrome; in a well-designed code, this mixture will be dominated by the most likely error, which is used for the correction.

Generally, a stabilizer code is characterized only in terms of the weights of its correctable errors, and not in terms of the specific types of errors produced by a quantum channel. It is natural therefore to consider error correction schemes that are designed to be effective for a specific type of error, rather than for all or errors below a given weight. Many methods for doing this have already been considered including designing new error correction procedures through direct optimization~\cite{lidar}, concatenating repetition codes with Calderbank-Shor-Steane (CSS) codes in the presence of dominant dephasing noise~\cite{shor}, and using asymmetric quantum codes to combat asymmetric noise~\cite{asymmetric1, asymmetric2, asymmetric3, asymmetric4, asymmetric5, cafaro1, cafaro2}.

Our focus in this paper will be this last case and we will design a system that exploits asymmetric quantum codes in the presence of sources of noise that exhibit a bias towards one type of error ($X$) over the other ($Z$).  (The reverse bias can be easily dealt with by switching the roles of the $X$ and $Z$ bases.)

\begin{defin}[Asymmetric quantum stabilizer code~\cite{asymmetric2}]
The asymmetric quantum stabilizer code denoted by $[[n, k, d_X/d_Z]]$ encodes $k$ logical qubits into $n$ physical qubits and can correct errors with up to $t_x = \lfloor (d_x - 1)/2 \rfloor$ $X$ operators and $t_z = \lfloor (d_z - 1)/2 \rfloor$ $Z$ operators.
\end{defin}

We will use as our test case the $[[15, 1, 7/3]]$ shortened Reed-Muller code~\cite{guillaume} which is the smallest asymmetric CSS code, and has the added benefit of allowing transversal non-Clifford operations~\cite{largeblockcodes}. Another CSS construction~\cite{bch} uses Bose-Chaudhuri-Hocquenghem (BCH) codes to form a $[[31, 6, 7/5]]$ asymmetric code which we will study in the second part of this work as well. It should be noted that our method is compatible with all asymmetric stabilizer codes, and has the advantage that our feedback control scheme is only a function of the noise channel and the code distances. As such, the analysis contained in this work applies to all codes that express asymmetric $X$ and $Z$ distances (though we only consider CSS codes in our examples).

\section{Quantum channels}
\label{sec:quantumchannels}

In general, a quantum operation $\Lambda$ is any map on the set of bounded positive operators on a Hilbert space that is completely positive i.e.: $\Lambda \ox I_n$ is positive for identity maps of any dimension $n$. Every such operation can be written in terms of Kraus operators $A_i$
\begin{equation}
	\Lambda \lp \rho \rp = \sum_i A_i^{\dag} \rho A_i.
\end{equation}
Additionally, if the operation preserves the trace, that is to say $\Tr \ls \Lambda \lp \rho \rp \rs = \Tr \ls \rho \rs$ for any state $\rho$, then it is known as a \emph{quantum channel}.

\begin{defin}[Unital quantum channel]
A unital quantum channel $\Lambda$ is a completely positive trace-preserving map on quantum states for which the maximally mixed state is a fixed point, that is 
\begin{equation}
	\Lambda(I / d) = I / d
\end{equation}
when $\Lambda$ acts on states in a Hilbert space of dimension $d$.
\end{defin}

A unital channel can equivalently be expressed as the convex combination of unitary channels~\cite{ruskai} and for this reason they often appear as the result of a partial trace after a joint unitary evolution over a quantum state and an environment. A common and simple restriction to unital channels is the Pauli channel.

\begin{defin}[Pauli channel]
A Pauli channel is a channel over $n$ qubits that can be written in the form
\begin{equation}
  \mcal{E} \lp \rho \rp = (1-p) \rho + \sum_i p_i A_i \rho A_i
\end{equation}
for $\sum_i p_i = p$ and $A_i \in \lb I, X, Y, Z \rb^{\ox n}$.
\end{defin}

In this work we will consider $n$ identical single-qubit channels acting on $n$ distinct qubits. The state of each qubit can be represented as a vector $\vec{r} \in \mbR^3$ contained inside the Bloch sphere~\cite{nielsenchuang}. We write the qubit state $\rho$ as
\begin{equation}
	\rho = \frac{I + \vec{r} \cdot \vec{\s}}{2}
\end{equation}
where $\vec{\s} = \ls X, Y, Z \rs$. Unital channels that act on single qubits can be described by their action on the vector $\vec{r}$
\begin{equation}
	\Lambda \lp \rho \rp = \frac{I + (M \vec{r}) \cdot \vec{\s}}{2}
\end{equation}
where the \emph{Bloch matrix} $M$ is any real $3 \times 3$ matrix that respects the Fujiwara-Algoet conditions~\cite{fujiwara1, fujiwara2, universalset}. We now introduce a subset of unital channels that will be the focus of our analysis.

\begin{defin}[Oriented Pauli channel]
\label{def:orientedpauli}
An \emph{oriented Pauli channel} $M$ is the result of a unitary channel $Q_U$, followed by a Pauli channel $D$, and the inverse unitary channel $Q_U^T$.
\end{defin}

We are motivated to use this definition due to the fact that any unital channel Bloch matrix $M$ can always be decomposed using the polar decomposition $M = Q_VP$ where $Q_V \in SO(3)$ and $P$ is positive semi-definite. When the action of the channel can be described entirely by $P$ (i.e.: $Q_V=I$), we have the result of Lemma~\ref{lem:Mrewrite} found the Appendix,
\begin{equation}
	M = (1-2p) I + 2p Q_U^T \ls \begin{array}{ccc}
      k_1 & 0 & 0 \\
      0 & k_2 & 0 \\
      0 & 0 & k_3
    \end{array} \rs Q_U .
    \label{eqn:oriented_Pauli_m}
\end{equation}
where $Q_U \in SO(3)$. In this picture, $p$ is the probability of acting non-identically on the qubit and $k_i = p_i / \sum_j p_j$ where $p_i$ is are the $X$, $Y$, and $Z$ error rates in some non-standard basis $Q_U$. Note also that when $Q_U=I$ and $p<1/2$, this reduces to a single-qubit Pauli channel.

Definition~\ref{def:orientedpauli} also lends itself a description in terms of CPTP maps. In other words, an oriented Pauli channel represented by the map $\Lambda_M$ is the result of a unitary channel $\Lambda_U$, followed by a Pauli channel $\Lambda_D$, and the inverse unitary channel $\Lambda_{U^{\dag}}$. Additionally, we note that the polar decomposition $M = Q_VP$ shows that any general unital channel can be written as the composition of an oriented Pauli channel with a unitary channel $\Lambda_V$.

Characterizing quantum channels is routinely done by quantum tomography which has yielded powerful methods for faithfully reconstructing process matrices with a tractable number of samples~\cite{kosut, flammia}. Tomography methods such as compressed sensing require the preparation of resource-intensive randomized quantum states and measurements. A full reconstruction of all the Kraus operators constituting a quantum operation might not be needed for every task however. The goal of more recent work done in modeling error channels has been to allow for efficient simulation of fault-tolerance~\cite{pta1, pta2, pta3, pta4} by approximating physical error models under the Pauli twirl operation. In~\cite{emerson2007symmetrized}, the authors perform a coarse-grained averaging of the qubits by applying random Clifford operators and qubit permutations which allow for efficient extraction of channel parameters such as the probability of phase-flip errors of any given weight.

We specifically consider characterizing and exploiting the asymmetry of a quantum channel in-situ. We will estimate channel parameters by using the existing error correction apparatus, which has the benefit of not causing interruptions to ongoing computations. It also allows us to account for noise channels that may be changing over time. The authors of~\cite{ferrie} already do this with great success in the presence of dephasing noise followed by an unknown $X$-axis rotation. The recent experiments of~\cite{iontrap} demonstrate a similar method in a system of nine physical qubits implementing the $5$-bit repetition code. Again, the authors are able to compensate for time-varying parameters in the noise channel but without the restriction that the channel, or the corrective controls, be identical on all qubits. Like these works, our goal in learning the channel parameters will be to improve the performance of error correction.

\section{Control scheme}
\label{sec:noisecontrolscheme}

We now present a model for reducing the rate of uncorrectable errors in two different parameterized noise channels. Our model of feedback control will be to apply a coherent rotation $\hat{U}_t$ to the physical qubits and to modify our stabilizer measurements by the same unitary, as shown in Figure~\ref{fig:circuit}. Specifically, for stabilizer generators $g_s$,
\begin{equation}
  g_s \rightarrow \lp \hat{U}_t \rp^{\ox n} g_s \lp \hat{U}_t^{\dag} \rp^{\ox n} .
  \label{eqn:g_smod}
\end{equation}
In the single-parameter case we describe below, we will use this construction to effectively rotate a dephasing channel into a bit-flip channel. Later, when we examine a multi-parameter model, this same construction will counteract the unitary rotation $\Lambda_U$ of an oriented Pauli channel.

\begin{figure}
  \centering
    \includegraphics[width=0.4\textwidth]{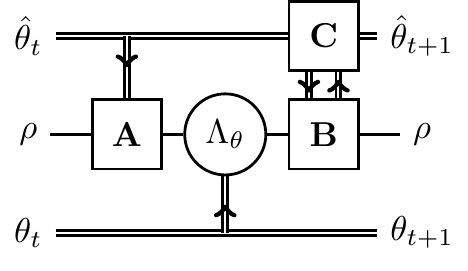}
  \caption{Our model of feedback control for optimal use of the asymmetric code uses a real-time estimate of the parametrized noise channel $\Lambda_{\theta}$ to modify the codespace and stabilizers. (\textbf{A}) We rotate all qubits in the codespace by $\hat{U}_t$. (\textbf{B}) Stabilizer error correction using the modified stabilizers of Eq.~\eqref{eqn:g_smod}. (\textbf{C}) The estimator updates the form of the stabilizer measurements. Following the syndrome extraction step, the estimator also updates the parameter estimate, symbolized by the two directions of the flow of information. Quantum information and classical information are distinguished in this circuit with single and double lines, respectively. We note, however, that the classical information of $\hat{\theta}_t$ is actually a complete encoding of the estimator of $\theta_t$. This can mean, as it does in our case, that the estimator is encoded as a point-wise approximation of the Bayesian prior distribution of $\theta_t$. Additionally, A, B, C, are error-free.}
  \label{fig:circuit}
\end{figure}

In this paper we will approximate the rate of uncorrectable errors by counting all errors whose weight exceeds the minimum number of correctable errors of the code.  (In practice, it may be possible to improve on our results by considering correctable errors of all weights, since most codes can correct many higher-weight errors as well.) Thus, the uncorrectable error rate for an $[[n, k, d_X/d_Z]]$ asymmetric code can be found by summing the probability of all possible errors that violate the distances $d_X$ and $d_Z$ yet act on the fewest number of qubits. We will denote by $\mcal{T}$ the set of error weights which violate the constraints in this way. The individual qubit errors can occur on any of the $n$ qubits, which we account for by the multinomial coefficient 
\begin{equation}
	{n \choose w_x, w_y, w_z} = \frac{n!}{(n-w_x-w_y-w_z)! w_x! w_y! w_z!} 
\end{equation}
where $w_x+w_y+w_z = w \leq n$. Altogether,
\begin{equation}
  \label{eqn:p_fail}
  p_{\text{fail}} = \sum_{\mcal{T}} {n \choose w_x, w_y, w_z} p_x^{w_x}p_y^{w_y}p_z^{w_z} (1-p)^{n-w}
\end{equation}
where $\mcal{T} = \lb w_x, w_y, w_z \; : \; w_x+w_y \leq t_x, w_z+w_y \leq t_z \rb$,  $t_i = \lfloor(d_i - 1)/2\rfloor$ and $p=p_x+p_y+p_z$.

\section{Single-parameter estimation}
\label{sec:singleparameterestimation}

We begin with an oriented Pauli channel $\Lambda_{\theta}$ which is a single-parameter generalization of dephasing noise. The action of the channel on one qubit can be described by a single Kraus operator, $A \lp \theta \rp$
\begin{equation}
  \Lambda_{\theta} \lp \rho \rp
  = \lp 1 - p \rp \rho + p A \lp \theta \rp \rho A \lp \theta \rp \label{eqn:oneKraus}.
\end{equation}
where
\begin{equation}
  A(\theta) = e^{-i \theta Y} Z e^{i \theta Y} .
\end{equation}
We do not know a priori the parameter $\theta$ and thus the effective channel, as perceived by our measurements of the stabilizer generators, is equivalent to that of $\Lambda_{\theta}$ under the Pauli twirl approximation~\cite{pta1}:
\begin{equation}
  \mcal{P} \ls \Lambda_{\theta} \rs \lp \rho \rp
   = \lp 1 - p \rp \rho  + p \cos^2 \theta X \rho X + p \sin^2 \theta Z \rho Z.
\end{equation}
This same expression can easily be derived by noting that
\begin{equation}
	e^{-i\theta Y} Z e^{i\theta Y} = Z \sin(2\theta) + X \cos(2\theta)
\end{equation}
then expanding the formula for $\Lambda_{\theta}$ and measuring the $X$ and $Z$ operators.

The goal of the protocol we describe below is to recover the parameter $\theta$ and substitute for the stabilizer operators in the standard basis $g_s$ new stabilizers that are aligned to exploit the asymmetry of the noise channel and maximize the lifetime of the code. This means applying $\hat{U}_t = e^{-i \hat{\theta} Y}$ to the stabilizers and to the code space (by physically applying a rotation to each qubit of the codeword), where $\hat{\theta}$ is our estimate of $\theta$ recovered after each round of error correction.

\subsection{Fixed dephasing angle}
\label{sec:fixed_dephasing_angle}
Our first case, and the one where we will demonstrate the greatest gain in code lifetime, is when the angle $\theta$ is fixed in time. We use a Bayesian estimator~\cite{sivia} that begins with uniform belief about the orientation channel $P_0(\theta; \theta_0) = 1 / \pi$ for $\theta \in [0, \pi)$. Every round of stabilizer error correction will diagnose and correct a total of $w_x$ bit-flip errors and $w_z$ phase-flip errors. Our knowledge of the dephasing angle at time $t+1$ includes information learned from the error weights, the previous configuration of the estimator, and the true value of $\theta_0$. Thus, we can express our belief about $\theta$ at time $t+1$ using Bayes' rule:
\begin{align}
  & P_{t+1}(\theta | w_x, w_z; \hat{\theta}_t, \theta_0) \nonumber \\
  & = \frac{P_{t+1}(w_x, w_z | \theta; \hat{\theta}_t,  \theta_0) P_{t+1}(\theta; \hat{\theta}_t, \theta_0)}{P_{t+1}(w_x, w_z)} \\
  & = \frac{P_{t+1}(w_x, w_z | \theta; \hat{\theta}_t,  \theta_0) P_{t+1}(\theta; \hat{\theta}_t, \theta_0)}{\int P_{t+1}(w_x, w_z | \theta; \hat{\theta}_t, \theta_0) P_{t+1}(\theta; \hat{\theta}_t, \theta_0) d\theta} \label{eqn:bayesupdate}
\end{align}
where $\theta$ is the random variable resulting from the update, $\hat{\theta}_t$ is the angle to which the stabilizers were configured in the previous round of error correction, and
\begin{align}
  P_{t+1}(w_x, w_z | \theta; \hat{\theta}_t, \theta_0)
  & = {n \choose w_x, w_z} p^{w_x + w_z} (1-p)^{n - w_x - w_z} \nonumber \\
  & \times \cos^{2w_x} \lp \lp \theta - \hat{\theta}_t \rp - \theta_0 \rp \nonumber \\
  & \times \sin^{2w_z} \lp \lp \theta - \hat{\theta}_t \rp - \theta_0 \rp .
\end{align}
After each update, we choose the new alignment of the stabilizers to be the maximum likelihood estimator of $\theta$:
\begin{equation}
  \hat{\theta}_t = \underset{\theta}{\mathrm{argmax}} \;\; P_t(\theta| w_x, w_z; \hat{\theta}_{t-1}, \theta_0) .
\end{equation}

This update rule has a particular advantage we can exploit to reduce the complexity of our estimator going forward. Note that when $w_x \geq 0$ and $w_z = 0$, $\hat{\theta}_t = \hat{\theta}_{t-1}$, which follows from the fact that $\cos^2 \lp \lp \theta - \hat{\theta}_t \rp - \theta_0 \rp$ has a single maximum on the interval $\ls 0, \pi \rp$ and is symmetric around it. Thus, we can perform our updates to the distribution of $\theta$ in bulk whenever we encounter a correctable $Z$ error by multiplying the distribution by the function
\begin{align}
  f_{t+1}(\theta, n_x; \hat{\theta}_t, \theta_0)
  & = \cos^{2n_x} \lp \lp \theta - \hat{\theta}_t \rp - \theta_0 \rp \nonumber \\
  &   \times \sin^{2} \lp \lp \theta - \hat{\theta}_t \rp - \theta_0 \rp .
\end{align}
where $n_x$ is the number of $X$ errors observed since the previous update. Each such step is followed by a renormalization of the posterior probability distribution $P_{t+1}(\theta| w_x, w_z; \hat{\theta}_t, \theta_0)$.

This yields an advantage for our numerical simulations as well. Since we need only update $\hat{\theta}_t$ on single $Z$ errors (as the $[[15, 1, 7/3]]$ code fails for $w_z > 1$), we can first sample the time until the next $Z$ error, then retroactively sample the number of $X$ errors to have occurred in the intermediate times, all the while accounting for the possibility of uncorrectable $X$ errors. This proves especially useful when our estimator is close to the optimal value $\theta$. In this case, $Z$ errors happen very infrequently, and the error rate is dominated by $O(p^4)$ terms, leading to many idle cycles of the simulation.

An important complication is that the distribution $P_t(\theta)$ cannot be stored in memory exactly and must be discretized. Let $P_t(\theta)$ be defined for the midpoints of the cells
\begin{equation}
  \ls 0, \pi \rp = \bigcup_{j=1}^{N-1} \ls \frac{j \pi}{N}, \frac{(j+1) \pi}{N} \rp.
\end{equation}
Thus, the difference between the true value $\theta_0$ and the maximum likelihood estimate $\hat{\theta}_t$ cannot necessarily be reduced to zero even at very long times, but may be as large as $\pi/2N$. This means that the rate of $Z$ errors can be, at best, suppressed to $p\sin^2(\pi/2N)$. Recall, however, that we do not need fully suppress $Z$ errors in order to take advantage of an asymmetric code. For very low effective $Z$ error-rates, the uncorrectable error rate in Eq.~\eqref{eqn:p_fail} is dominated by uncorrectable $X$ errors. If $p_z = cp_x^{t_x/t_z}$ for some $c > 0$, then
\begin{equation}
  p_{\text{fail}} = \lp c{n \choose t_z+1} + {n \choose t_x+1} \rp p_x^{t_x} + O \lp p_x^{t_x+1} \rp ,
\end{equation}
where $t_x=3$ and $t_z=1$ for the $[[15, 1, 7/3]]$ code. We can expect this very low rate only when $p_z = c p_x^2$ which is possible when  $\sin^2(\pi/2N) = O(p)$ or $N = O(1/p)$. Thus, we can only make optimal use of our estimate of $\theta$ when we take at least $N=O(1/p)$ partitions of $\ls 0, \pi \rp$.

The results of our simulations in Figure~\ref{fig:singleparamgraphs} show that for a fixed dephasing angle, our technique not only improves the code lifetime by a constant factor, but effectively increases the code distance. The mean lifetime of our adaptive $[[15, 1, 7/3]]$ code agrees with a power law of $p_{\text{fail}} = O\lp p^{3.99} \rp$ yielding an ``effective'' code distance of $6.98$. The next smallest CSS code that could yield a similar scaling is the $[[23, 1, 7]]$ code, but our adaptive technique outperforms even this code by a constant factor, thanks to the smaller number of physical qubits used.

\subsection{Drifting dephasing angle}

We now address the version of the previous problem with a parameter $\theta$ that is drifting in time. Consider a dephasing angle that at each time-step evolves via random Brownian motion:
\begin{align}
  \theta_{t+1}
  & = \theta_t + u \nonumber \\
  & \hspace{0.15in} \text{with } u \sim \mcal{N}(0, \kappa^2).
\end{align}
We can incorporate our knowledge of this drift into the estimator from the fixed angle case by convolving, at each time-step, our belief about $P_t(\theta_t)$ with the distribution of one step of the Brownian motion. That is, following every update in Eq.~\eqref{eqn:bayesupdate} we also apply
\begin{equation}
  P_{t+1}'(\theta_{t+1}) = P_{t+1}(\theta_{t+1}) * \lp \frac{1}{\kappa} \exp^{-\theta_{t+1}^2 / 2 \kappa^2} \rp ,
  \label{eqn:convolve}
\end{equation}
where $*$ denotes a convolution of the two functions defined as
\begin{equation}
  f(t) * g(t) = \int f(s)g(t-s)ds.
\end{equation}
We then take $\hat{\theta}_{t+1}$ to be the maximum likelihood estimate with respect to $P_{t+1}'(\theta_{t+1})$ instead.

Unlike in our simulations of the fixed angle case, we cannot make use of the retroactive $X$ error sampling that allowed us to simulate code lifetimes up to $10^{17}$ cycles.  We must therefore simulate every update straightforwardly. However, as Figure~\ref{fig:drift_trajectory} demonstrates, our estimator retains the property that the maximum likelihood estimate $\hat{\theta}_t$ only moves following $Z$ errors.

\begin{figure}
    \includegraphics[width=1.0\linewidth]{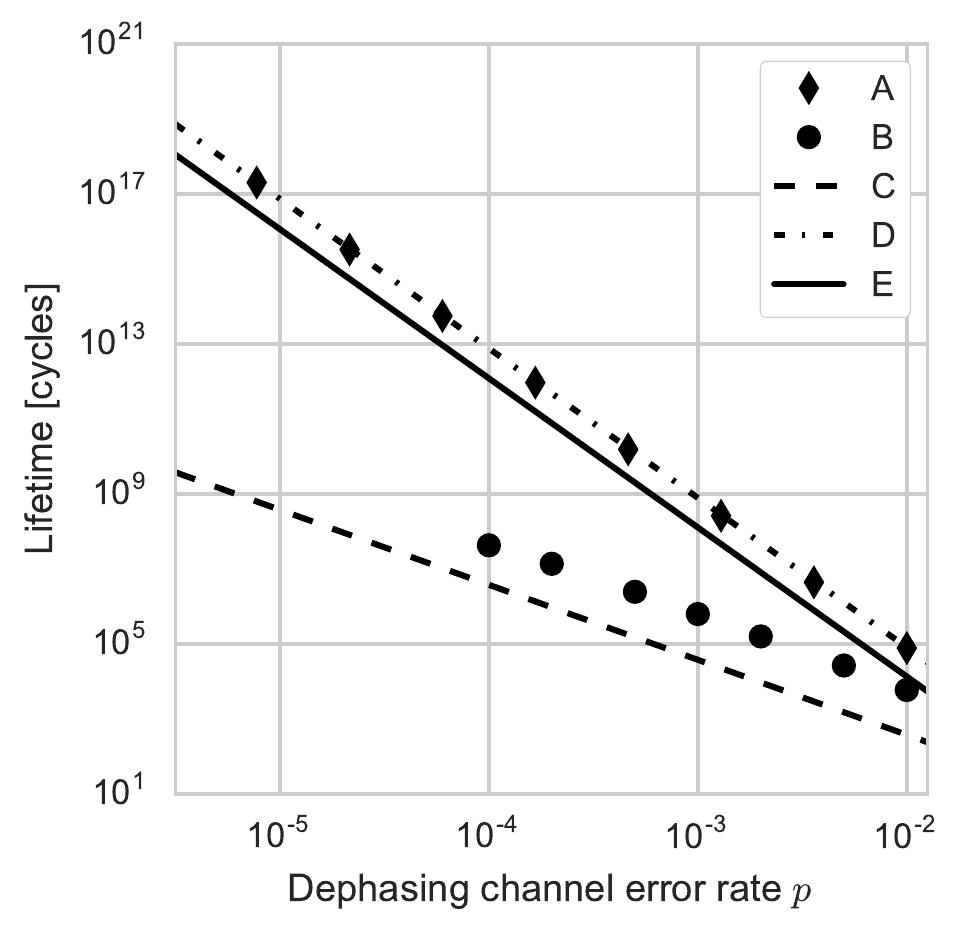}
    \caption{Performance of the adaptive stabilizers for the $[[15, 1, 7/3]]$ shortened Reed-Muller code in the presence of one-parameter generalized dephasing noise. \textbf{(A)} Mean lifetime of $5000$ samples ($100$ samples for $p=10^{-5}$) measured in terms of error correction cycles.  \textbf{(B)} Mean lifetime of $200$ samples in the presence of drifting dephasing noise ($\kappa^2=0.01$) measured in terms of error correction cycles. \textbf{(C)} Mode of the lifetime of the $[[15, 1, 7/3]]$ code without adaptive stabilizers. \textbf{(D)} Optimal lifetime of the $[[15, 1, 7/3]]$ code given perfect a priori knowledge of $\theta$. \textbf{(E)} Expected lifetime of the $[[23, 1, 7]]$ code. Error bars in this plot are sufficiently small to be disregarded.}
  \label{fig:singleparamgraphs}
\end{figure}

\begin{figure}
    \includegraphics[width=1.0\linewidth]{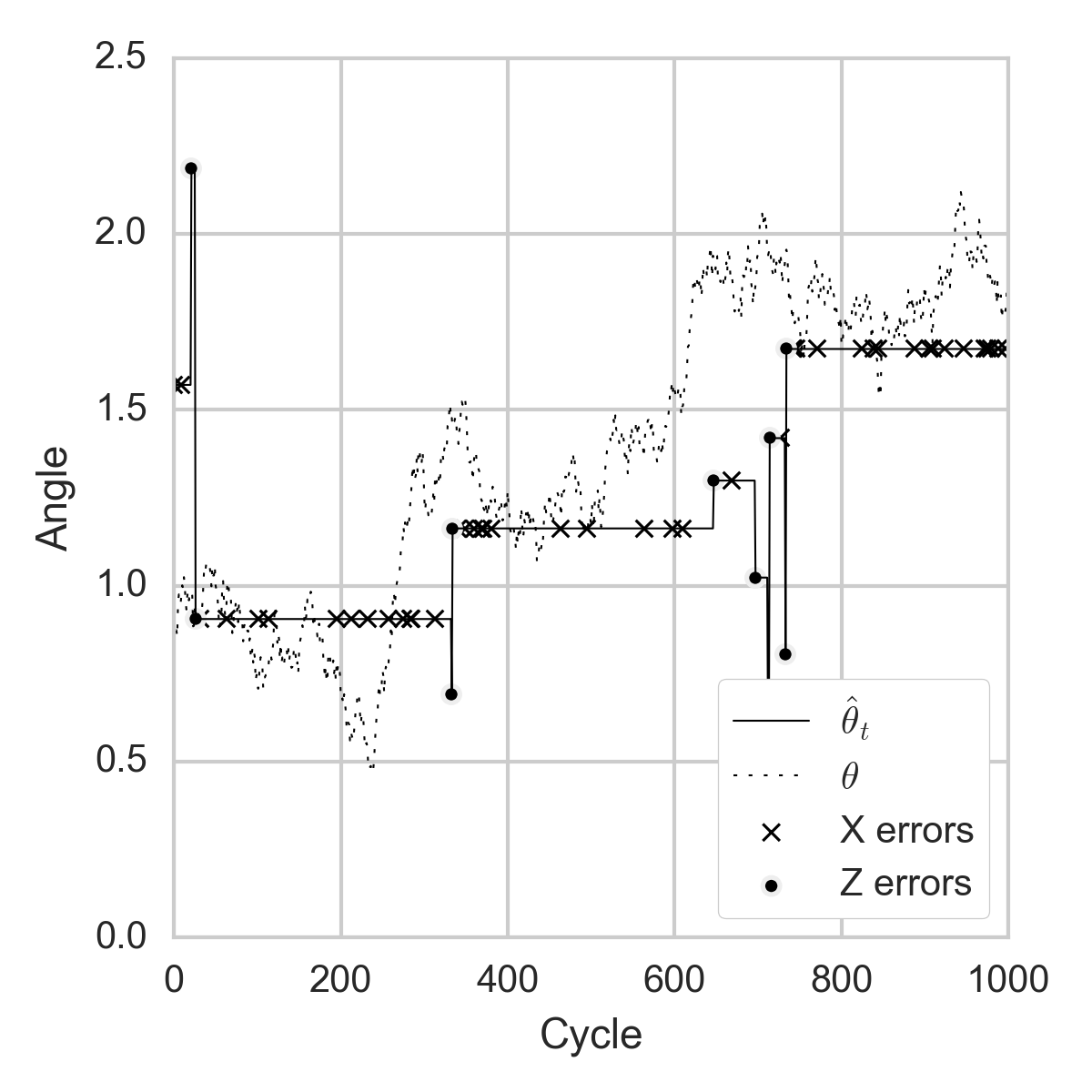}
  \caption{One run of our Bayesian estimator for $\theta$ drifting according to Brownian motion $\kappa^2 = 0.03$ and an error rate $p=0.003$.}
\label{fig:drift_trajectory}
\end{figure}

Figure~\ref{fig:singleparamgraphs} shows that using this modified estimator still yields a constant factor improvement to the lifetime of the $[[15, 1, 7/3]]$ code, but the lifetimes no longer scale with $O(1/p^4)$, the rate of uncorrectable $X$ errors. Instead, the drift in the channel effectively causes a constant but suppressed $Z$ error rate. This is because the the drift in $\theta_t$ implies that we cannot stay in the optimal configuration for very long.  This fact is reflected in our estimator by the convolution step, which widens any otherwise ``sharp" (i.e.: certain) belief about $\hat{\theta}_t$. From our simulations, the constant factor gain to the code lifetime for $\kappa^2=0.01$ is $6.33$.

\section{Multi-parameter estimation}
\label{sec:multiparameterestimation}

Although the single-angle dephasing model from the previous section is useful for studying the limits to which we can exploit a code's asymmetry, it is not very general, and we now turn our attention to the case of oriented Pauli channels. Recall that the Bloch matrix for such a channel can be written as a contraction in some non-standard basis:
\begin{equation}
  M = (1-2p) I + 2p Q^T \ls \begin{array}{ccc}
      k_1 & 0 & 0 \\
      0 & k_2 & 0 \\
      0 & 0 & k_3
    \end{array} \rs Q ,
\end{equation}
for some $Q \in SO(3)$ and its transpose $Q^T$. We choose this form for the Bloch matrix in order to highlight that our true objective in estimating $M$ is to find a rotation of the codespace and stabilizers such that the channel appears Pauli and allows for optimal use of the asymmetric code. Let $k_i$ be the \emph{eccentricities} of the oriented Pauli channel such that $k_1+k_2+k_3=1$, $p$ be the \emph{total error rate}, and $Q$ be the \emph{orientation} of the channel. Note that this channel, unlike the one in the previous section, can give rise to $Y$ errors. These can be corrected by any CSS code as simultaneous $X$ and $Z$ errors on the same qubit. Finally, we let $A$ be the matrix such that $M = (1-2p)I + 2pA$.

Note that we do not need to know $M$ to design the control unitary $\hat{U}_t$ for this scheme; knowledge of $A$ is sufficient to align the largest value of $k_i$ with the $X$ axis. In other words, the optimal choice of $\hat{U}_t$ does not depend on the total error rate $p$ (although estimating $p$ can also be done efficiently~\cite{fujiwara2014instantaneous}). If we let $\hat{Q}_t$ be the real orthonormal matrix associated with the unitary $\hat{U}_t$, then the effect of this counter-rotation on the Bloch matrix is $M = (1-2p) I + 2p \hat{Q}_t A \hat{Q}_t^T$, and it is clear that we should choose $\hat{Q}_t$ to diagonalize $A$ and order the eigenvalues such that $k_1 \geq k_3 \geq k_2$. This last details ensures that $Z$ errors are favored over $Y$ errors and $X$ errors are favored over both.

The performance of asymmetric codes against oriented Pauli channels with all $k_i > 0$ is limited, since even with optimal orientation, they exhibit non-zero $Z$ and $Y$ error rates. These channels cannot yield improvements to the effective code distance like those in the dephasing case, but they can yield improvements to the coefficient of the uncorrectable error rate, and thereby to the code lifetime.

Let $C_{\text{opt}}$ be the constant factor improvement to the lifetime yielded by the $[[15, 1, 7/3]]$ code for an oriented Pauli channel, given perfect a priori knowledge of the optimal orientation $Q_{\text{opt}}$. $C_{\text{opt}}$ then is given by the ratio of the expected lifetimes
\begin{equation}
  C_{\text{opt}} \simeq \frac{\mbE \ls t(Q_{\text{opt}}) \rs}{\int \mbE \ls t(Q) \rs dQ}
  \label{eqn:Copt}
\end{equation}
where $\mbE \ls t(Q) \rs$ is the lifetime at orientation $Q$ given by $1/p_{\text{fail}}(Q)$, the uncorrectable error rate. We take $\int dQ$ to be an integral over the Haar measure for $SO(3)$~\cite{simon1996representations}. For $k_1, k_2, k_3 \gg p$, the uncorrectable error is dominated by $ZZ$, $YY$, and $ZY$ errors. The approximation becomes
\begin{align}
  p_{\text{fail}}(Q)
  & = {15 \choose 0, 2, 0} p_y^2 + {15 \choose 0, 0, 2} p_z^2 + {15 \choose 0, 1, 1} p_yp_z , \nonumber \\
  & = 105(p_y + p_z)^2 , \nonumber \\
  & = 105(p-p_x)^2 .
\label{eqn:Creasoning}
\end{align}
The $X$ error rate is $p_x = p \vec{e}_1^T QDQ^T \vec{e}_1$ where $\vec{e}_i$ are the standard basis vectors and $D$ is the diagonal matrix with diagonal elements $k_i$. Note that for $Q$ with columns $\vec{q}_i$, the $X$ error rate reduces to $p \vec{q}_1^TD\vec{q}_1$, where $\vec{q}_1$ follows a uniform distribution on $SO(3)$ given by
\begin{equation}
  \vec{q} = \ls \begin{array}{c}
    u \\
    \sqrt{1-u^2} \sin \lp 2 \pi v \rp \\
    \sqrt{1-u^2} \cos \lp 2 \pi v \rp
  \end{array} \rs
  \hspace{0.25in}
  \lb \begin{array}{l}
    u \sim \text{unif} \ls -1, 1 \rs , \\
    v \sim \text{unif} \ls 0, 1 \rs .
  \end{array} \right.
\end{equation}
We can thus derive the distribution on $k_x$:
\begin{equation}
	k_x = k_1u^2 + k_2(1-u^2)\sin^2 \lp 2 \pi v \rp + k_3(1-u^2)\cos^2 \lp 2 \pi v \rp.
\end{equation}
To take the Haar integral needed for the average lifetime, we must calculate
\begin{align}
	\lefteqn{\int \mbE \ls t(Q) \rs dQ} \nonumber\\
	& = \int_{-1}^1 \int_0^1 \frac{1}{105p^2(1-k_x)^2} dv du , \nonumber\\
	& \leq \int_{-1}^1 \frac{1}{105p^2 \lp 1-k_1u^2 - k_3(1-u^2) \rp^2} du , \nonumber\\
	& = \int_{-1}^1 \frac{1}{105p^2 \lp (1-k_3)-(k_1-k_3)u^2 \rp^2} du , \nonumber\\
	& = \frac{\tanh^{-1} \lp \sqrt{(k_1-k_3)/(1-k_3)} \rp}{210p^2(1-k_3)\sqrt{(1-k_3)(k_1-k_3)}} \nonumber\\
	& \hspace{0.15in} + \frac{1}{210p^2(1-k_3)(1-k_1)} . 
\end{align}
where in the second line we've used that $k_3 \leq k_2$. For the channel of eccentricities $(0.7, 0.2, 0.1)$, the optimal and average lifetimes become
\begin{align}
  \mbE \ls t(Q_{\text{opt}}) \rs & \simeq \frac{11.11}{105p^2} , \\
  \int \mbE \ls t(Q) \rs dQ & \lesssim \frac{2.71}{105p^2} .
\end{align}
Thus, the optimal improvement to the code lifetime that we can expect is $C_{\text{opt}} \gtrsim 4.01$. A similar calculation for the $[[31, 6, 7/5]]$ code yields $C_{\text{opt}} \gtrsim 9.34$.

\subsection{Non-degenerate grids for Bayesian inference}

We again will use Bayesian inference to estimate the relevant parameters of the matrix $A$. Note, however, that were we to learn the rates of $X$, $Y$ and $Z$ errors, we could only determine the diagonal elements of matrix $A$. This is a problem acknowledged by~\cite{omkar} and~\cite{ferrie}. In~\cite{omkar} the authors propose ``toggling" the codespace to estimate off-diagonal elements of the Bloch matrix by rotating the codespace and pre-processing the encoded state. Although their method is applicable to much more general multi-qubit process matrices, it cannot be performed in-situ without incurring significant costs by moving the codespace away from the optimum.

The solution we propose is to sample the space of Bloch matrices randomly, and track the posterior probabilities for a field of $N$ sampling points in the parameter space of $A$. This method of approximating the likelihood function is similar to our point-wise estimate in section~\ref{sec:fixed_dephasing_angle} which was supported on regularly-spaced sampling points in $\mbR$. We sample each point $X$ in our randomized grid as follows. First, we will sample eccentricities $x_1$, $x_2$, $x_3$ according to the distribution
\begin{align}
          \Pr \lb x_1 \rb = 1 , & \hspace{0.25in} \text{for } x_1 \in \ls 0, 1 \rs , \nonumber\\
 	  \Pr \lb x_2 |x_1 \rb = 1/(1-x_1) , & \hspace{0.25in} \text{for } x_2 \in \ls 0, 1-x_1 \rs ,
\end{align}
and $x_3 = 1 - x_1 - x_2$. We take $D_X$ to be the diagonal matrix with diagonal elements $x_1$, $x_2$, and $x_3$. Note that the average channel in our grid will have eccentricities $(1/2, 1/4, 1/4)$, which means that our prior distribution assumes some asymmetry in the channel. Next, we sample an orthonormal basis $Q_X$ from the Haar measure over $SO(3)$ (for which efficient methods exist~\cite{graphicsgems}). Finally, we let our sampling point be
\begin{equation}
	X  = Q_X^T D_X Q_X .
	\label{eqn:randomorientedpauli}
\end{equation}

We now relate the number of sampling points $N$ to the optimal performance of our asymmetric codes. We do this in two steps:  first, by bounding the minimum distance of a fixed channel $A$ to the closest element in a randomized grid, and second, by bounding the rate of uncorrectable errors using the minimum distance of the grid. The first step is summarized in the following Lemma, the proof of which can be found in the Appendix.

\begin{lem}[Number of sampling points versus minimum distance]
\label{lem:gridsize}
Let $X_i$ be $N$ i.i.d. copies of randomly oriented Pauli channels according to~\eqref{eqn:randomorientedpauli}, and let $A$ be a fixed channel with eccentricities $(a_1, a_2, a_3)$. Then we have that
\begin{equation}
  \Pr \lb Z < \e \rb \geq 1 - \lp 1 - \frac{\e^5}{2^{12}\sqrt{2}3^3 a_1^2a_2a_3} \rp^N, 
  \label{eqn:gridbound}
\end{equation}
where $Z=\min_i \left\| X_i - A \right\|_2$ and $\left\| Y \right\|_2 = \sqrt{\Tr \ls Y^T Y \rs}$ is the Frobenius norm.
\end{lem}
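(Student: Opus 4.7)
Because the $X_i$ are i.i.d., $\Pr\{Z \geq \epsilon\} = (1-p_\epsilon)^N$ with $p_\epsilon := \Pr\{\|X_1 - A\|_2 < \epsilon\}$, so it is enough to prove the single-sample lower bound $p_\epsilon \geq \epsilon^5/(2^{12}\sqrt{2}\,3^3 a_1^2 a_2 a_3)$. Writing $A = Q^T D_A Q$ with $D_A = \mathrm{diag}(a_1,a_2,a_3)$, the bi-invariance of Haar measure on $SO(3)$ together with the orthogonal invariance of the Frobenius norm lets me replace $A$ by $D_A$ (simultaneously replacing $Q_X$ by $Q_X Q^T$, still Haar-distributed), so the task reduces to lower bounding $\Pr\{\|Q_X^T D_X Q_X - D_A\|_2 < \epsilon\}$.

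I would then introduce a product event $G = G_D \cap G_Q$ on which $\|X_1 - D_A\|_2 < \epsilon$ is forced, exploiting independence of $D_X$ and $Q_X$ so that $\Pr(G) = \Pr(G_D)\Pr(G_Q)$. The triangle-inequality split
\[
  \|Q_X^T D_X Q_X - D_A\|_2 \;\leq\; \|D_X - D_A\|_2 + \|Q_X^T D_A Q_X - D_A\|_2
\]
lets me pick $G_D$ and $G_Q$ so that each term is at most $\epsilon/2$. Taking $G_D$ to be the box $|x_i - a_i| < \delta_1$ in the $(x_1, x_2)$-simplex (with $x_3$ determined) forces $\|D_X - D_A\|_2 \leq \sqrt{6}\,\delta_1$, and $\Pr(G_D) \geq (2\delta_1)^2$ follows from the trivial simplex-density bound $1/(1-x_1) \geq 1$. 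For $G_Q$, parametrize $Q_X = \exp Y$ with $Y$ antisymmetric near the origin; the expansion $Q_X^T D_A Q_X - D_A = [D_A, Y] + O(\|Y\|^2)$ together with the identity $\|[D_A, Y]\|_2^2 = 2\sum_{i<j}(a_i-a_j)^2 Y_{ij}^2$ shows that the sub-level set $\{Y : \|[D_A, Y]\|_2 < \epsilon/2\}$ contains an ellipsoid whose Lebesgue volume is proportional to $\epsilon^3/\prod_{i<j}|a_i-a_j|$. Multiplying by the Haar density $1/(8\pi^2)$ at the identity then gives $\Pr(G_Q)$.

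Combining these, the natural lower bound produced is $p_\epsilon \gtrsim \epsilon^5/\prod_{i<j}|a_i-a_j|$ up to explicit universal prefactors. The main obstacle is converting $\prod_{i<j}|a_i-a_j|$ into the stated $a_1^2 a_2 a_3$: I would use $|a_1 - a_j| \leq a_1$ for $j \in \{2,3\}$ to extract an $a_1^2$ factor, and then replace the remaining $|a_2-a_3|$ by an expression symmetric in $a_2$ and $a_3$ (for instance via $(a_2-a_3)^2 \leq (a_2+a_3)^2$ and the AM--GM step $a_2 + a_3 \geq 2\sqrt{a_2 a_3}$) to introduce $\sqrt{a_2 a_3}$ factors that combine across the two transverse directions of the ellipsoid. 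Carefully tracking the numerical prefactors from the simplex density, the Haar density at the identity, the 3-ball volume $4\pi/3$, and the $\epsilon/2$ allocation between $G_D$ and $G_Q$ then yields the explicit constant $2^{12}\sqrt{2}\,3^3$. The steps I expect to be the most error-prone are this final constant-chasing and the verification that the higher-order $O(\|Y\|^2)$ correction to the commutator does not spoil the first-order estimate over the relevant range of $\epsilon$.
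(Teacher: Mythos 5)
Your overall skeleton --- the i.i.d.\ reduction to a single-sample bound, the Haar-invariance reduction to comparing against $D_A$, and a product event splitting the eigenvalue part from the rotation part via the triangle inequality --- matches the paper's proof. But your treatment of the rotation part is genuinely different and, as proposed, cannot deliver the stated bound. The paper decomposes $Q^T D_X Q - D_A = \sum_i \left( x_i \vec{q}_i\vec{q}_i^T - a_i\vec{e}_i\vec{e}_i^T \right)$ and bounds each rank-one term by $\sqrt{2}\,a_i\left\|\vec{q}_i-\vec{e}_i\right\|_2 + \epsilon_D$ (via a separate lemma on outer products), so the event it needs is that each column $\vec{q}_i$ of the Haar-random rotation lies within $O(\epsilon/a_i)$ of $\vec{e}_i$; the resulting column probabilities, scaling as $(\epsilon/a_1)^2$, $(\epsilon/a_2)$, and $1/a_3$, are exactly what produce the $a_1^2a_2a_3$ in the denominator. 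Your Lie-algebra linearization instead measures the sensitivity of $Q^TD_AQ$ to $Q$ through the commutator $[D_A,Y]$, which produces the eigenvalue \emph{gaps} $|a_i-a_j|$ rather than the eigenvalues themselves.

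The genuine gap is your final conversion step. To pass from $\epsilon^5/\prod_{i<j}|a_i-a_j|$ to the stated $\epsilon^5/(C\,a_1^2a_2a_3)$ you need the upper bound $\prod_{i<j}|a_i-a_j| \le C\,a_1^2a_2a_3$, and this is false: for $(a_1,a_2,a_3)=(0.6,\,0.39,\,0.01)$ the left side is about $0.047$ while $a_1^2a_2a_3\approx 1.4\times 10^{-3}$, and as $a_3\to 0$ with $a_1,a_2$ well separated the required $C$ diverges. The specific algebra you propose --- $(a_2-a_3)^2\le(a_2+a_3)^2$ followed by AM--GM $a_2+a_3\ge 2\sqrt{a_2a_3}$ --- chains the two inequalities in opposite directions and does not yield $|a_2-a_3|\le 2\sqrt{a_2a_3}$, which already fails for $a_2=0.3$, $a_3=0.003$. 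So your route proves a correct but different statement, one whose constant degrades with the spectral gaps of $D_A$ instead of with its eccentricities; to obtain the lemma as written you should switch to the column-wise outer-product decomposition. Two smaller points: your claim $\Pr(G_D)\ge(2\delta_1)^2$ requires the box to lie inside the simplex (in general you lose a constant factor, as in the paper's $(\epsilon/2)^2$), and the $O(\|Y\|^2)$ correction you defer is avoided entirely by the column-wise argument, which needs no linearization.
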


For $\e \ll 1$ the bound in Lemma~\ref{lem:gridsize} scales as
\begin{equation}
  \Pr \lb \min_i \left\| X_i - A \right\|_2 < \e \rb \geq N \frac{\e^5}{2^{12}\sqrt{2}3^3 a_1^2a_2a_3}.
\end{equation}
Thus, to ensure that a channel in our random grid approximates $A$ to precision $\e$ in the Frobenius norm with probability $p_{A}$, we need to populate our random grid with a very large number of sampling points:
\begin{equation}
  N = \frac{2^{12}\sqrt{2}3^3 a_1^2a_2a_3}{\e^5p_{A}}.
\end{equation}
We provide this estimate for the explicit purpose of characterizing the computational resources required to implement our scheme. If the adaptive encoding procedure outlined in this text is to be implemented on classical computing architecture operating in the very low-latency setting of stabilizer quantum error correction, then it must be possible to achieve the gains we describe without unreasonable overhead.

We can also use this grid size spacing to estimate code performance. For the oriented Pauli channel with eccentricities $k_1 = 0.7$, $k_2 = 0.2$, $k_3 = 0.1$, and with $30,000$ sampling points, we can expect
\begin{equation}
  \Pr \lb Z < \e \rb \geq 6.92 \e^5.
\end{equation}
The estimate we found for the code performance in Eq.~\eqref{eqn:Copt} assumes complete and exact knowledge of the channel matrix $A$. What we've found in the derivation above, however, is a bound on our ability to learn an approximation $X$ of the matrix $A$.  As such, the lifetime of our code in practice will not be bounded by the optimal performance of Eq.~\eqref{eqn:Copt}, but rather by the quality of this approximation. We now derive the relationship between the Frobenius norm distance between $X$ and $A$ and the code performance.

Let $x_{11}$ be the first diagonal element of $X$.  Then we know that
\begin{align}
	\lefteqn{\left| x_{11} - k_1 \right| \leq \left\| X - A \right\|_2} \\
	& \hspace{0.1in} \Longrightarrow \hspace{0.1in} \left| 1- x_{11} \right| \leq \left| 1 - k_1 \right| + \left\| X - A \right\|_2 .
\end{align}
We identify the lifetime coefficient $C$ as the ratio of $p_{\text{fail}}$ for random and optimal orientations, which can be bounded from below by
\begin{equation}
	C \geq \frac{(2/3)^{t_z}}{(1-k_1 + \left\| X - A \right\|_2)^{t_z}}.
	\label{eqn:code_perf}
\end{equation}
We call this lower bound the \emph{expected code performance}, and will make use of it as a baseline against which to compare the numerical results of the following section.

\section{Numerical results}
\label{sec:numericalresults}

In Figures~\ref{fig:codeperfs1} and~\ref{fig:codeperfs2} we graph the results of $1000$ trials of our adaptive estimation and error correction method for the $[[15, 1, 7/3]]$ and the $[[ 31, 6, 7/5 ]]$ codes, at $N=30,000$ sampling points. We also plot the expected code performance as a function of the Frobenius norm using our bound derived above. Recall that the expected code performance serves as a lower bound, and our results indicate that it is reasonably tight.

When our adaptive technique is applied, the distribution of lifetimes for the $[[31, 6, 7/5]]$ code changes from the usual geometric distribution to a more uniform distribution with a high proportion of trials lasting from $2$ to $10$ times the expected unadaptive lifetime. We explain this behavior in the following way: the $31$-qubit code fails at a rate of $p^3$ for weight $3$ $Z$ errors, and thus the average number of updates to the posterior at the expected time of failure scales with $1/p^3$. The $15$-qubit code, on the other hand, fails at a rate of $p^2$ and so has, on average, learned a worse estimate of $A$ than the $31$-qubit code at the expected time of failure. This suggests that the code distance has an additional effect beyond simply decreasing the baseline rate of uncorrectable errors. The data in Figures~\ref{fig:codeperfs1} and~\ref{fig:codeperfs2} suggest that larger codes benefit proportionally more from our adaptive technique, as they are able to learn the channel parameters more accurately before the probability of an uncorrectable error becomes appreciably high.

\begin{figure}
   \includegraphics[width=1.0\linewidth]{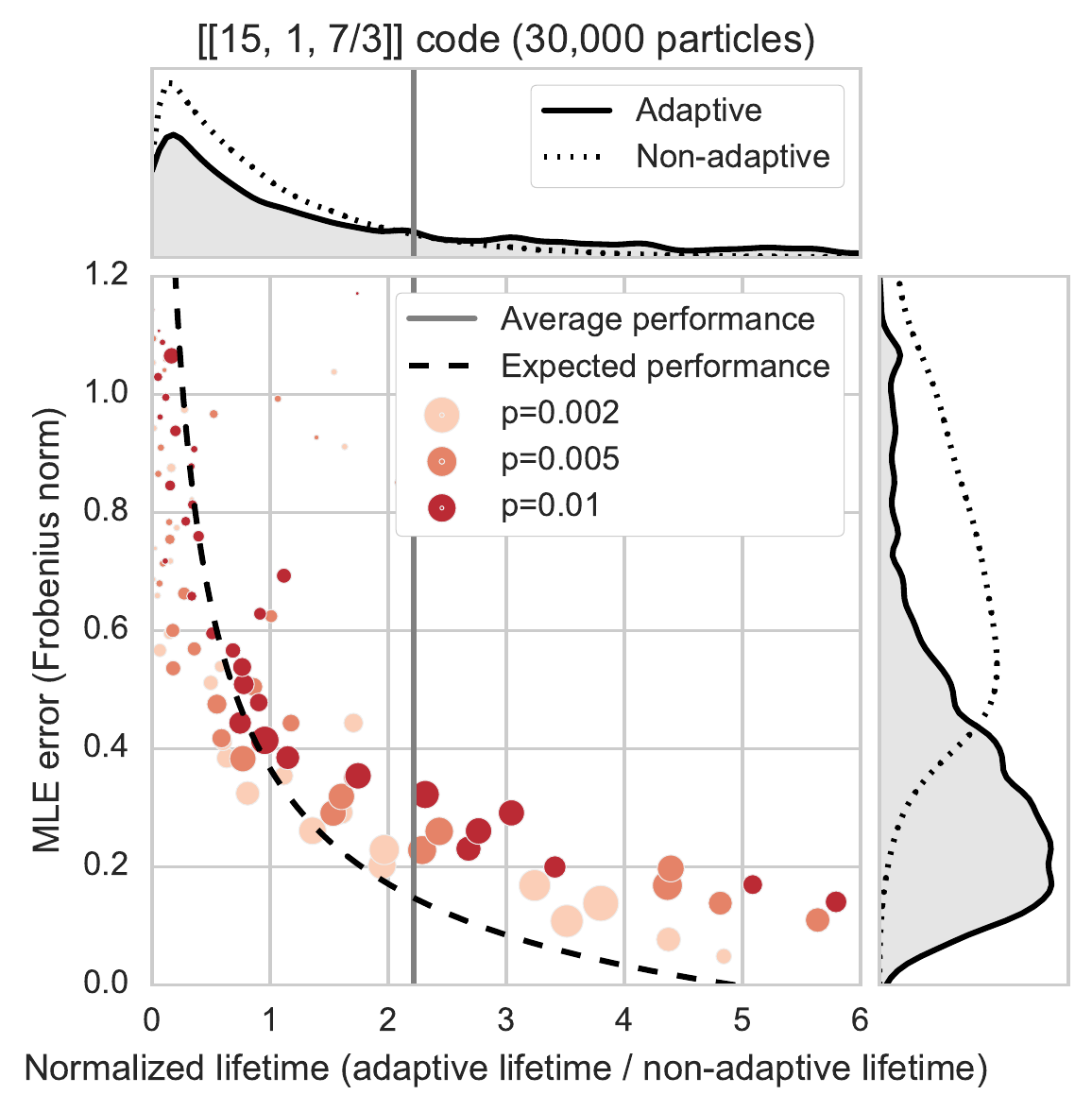}
  \caption{Performance (as defined in Eq.~\eqref{eqn:Copt}) of the adaptive stabilizers for the $[[15, 1, 7/3]]$ shortened Reed-Muller code with $30,000$ sampling points in the presence of a fixed unital channel of eccentricities $(0.7, 0.2, 0.1)$ and an unknown orientation. $300,000$ error correction simulations were performed to obtain this graph. In each trial, the adaptive correction procedure is performed until an uncorrectable error is encountered. The MLE (maximum likelihood estimator) error is taken as the Frobenius distance between the Bloch matrix of the channel given by the MLE and the target matrix at the time of the uncorrectable error.  For each error rate, the size of the marker represents the number of trials that yielded the given normalized lifetime. The distribution of both of these trial properties is also summarized in the top and left histograms.}
  \label{fig:codeperfs1}
\end{figure}

\begin{figure}
   \includegraphics[width=1.0\linewidth]{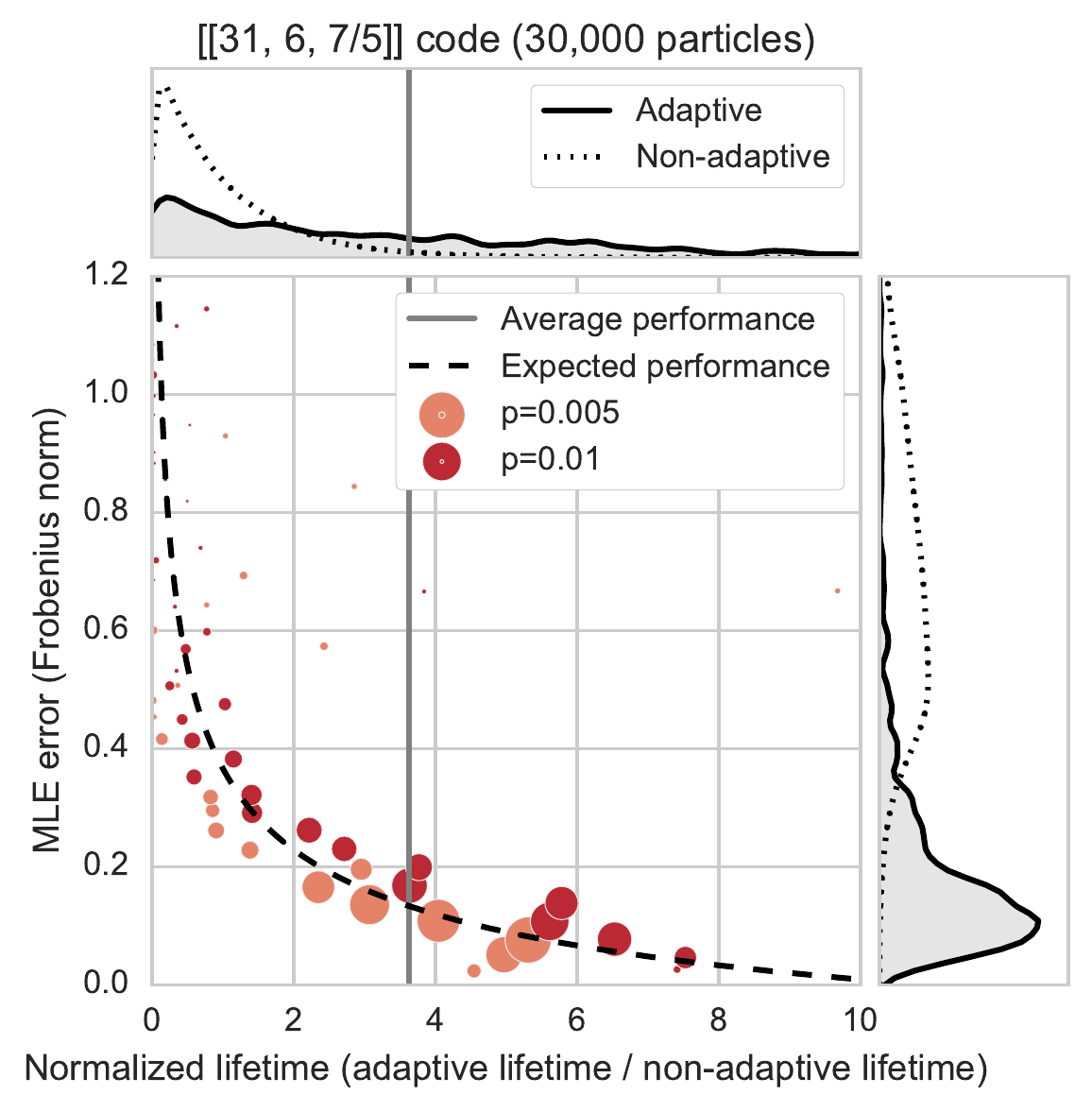}
  \caption{Performance (as defined in Eq.~\eqref{eqn:Copt}) of the adaptive stabilizers for the $[[ 31, 6, 7/5 ]]$ code with $30,000$ sampling points in the presence of a fixed unital channel of eccentricities $(0.7, 0.2, 0.1)$ and an unknown orientation. The $x$ and $y$ axes, as well as the size of the plotted points, are defined as in Figure~\ref{fig:codeperfs1}.}
  \label{fig:codeperfs2}
\end{figure}

In Table~\ref{tab:codeperfs} we summarize the improvement to the lifetime and the Frobenius error at the time of failure of each code. In Figure~\ref{fig:particle_effects} we illustrate that the minimum Frobenius distance of a random channel does indeed follow a power law of $N^{1/5}$, but our bound from Eq.~\eqref{eqn:gridbound} overestimates the true grid spacing by a significant factor. In the same figure we demonstrate the decay of the improvement to the code lifetime with decreasing number of sampling points $N$. Surprisingly, our method still yields some benefit even with only $10$ sampling points. This is likely due to the fact that when sampling random channels and random grid points, the probability of generating a grid point close to $A$ is still high enough that some trials have this property and inherit disproportionately long lifetimes, thereby raising the average.

  \begin{table}
    \renewcommand\tabcolsep{6pt}
    \newcolumntype{C}{>{\centering\arraybackslash}m{1.65cm}}
    \begin{center} \begin{tabular}{|C|C|C|C|C|}
        \hline Code & Number of sampling points & Normalized lifetime $C$ & Frobenius error \\
        \hline
        \hline $[[15, 1, 7/3]]$ & $2500$ & 2.04 & 0.37 \\
        \hline $[[15, 1, 7/3]]$ & $30,000$ & 2.17 & 0.41 \\
        \hline $[[31, 6, 7/5]]$ & $2500$ & 3.17 & 0.20 \\
        \hline $[[31, 6, 7/5]]$ & $30,000$ & 3.68 & 0.24 \\
        \hline
    \end{tabular}\end{center}
    \caption{Average gains made to the code lifetime from numerical simulations of $1000$ samples at each error rate. The Frobenius error is calculated at the time of code failure.}
    \label{tab:codeperfs}
  \end{table}

\begin{figure}[h]
  \centering
    \includegraphics[width=1.0\linewidth]{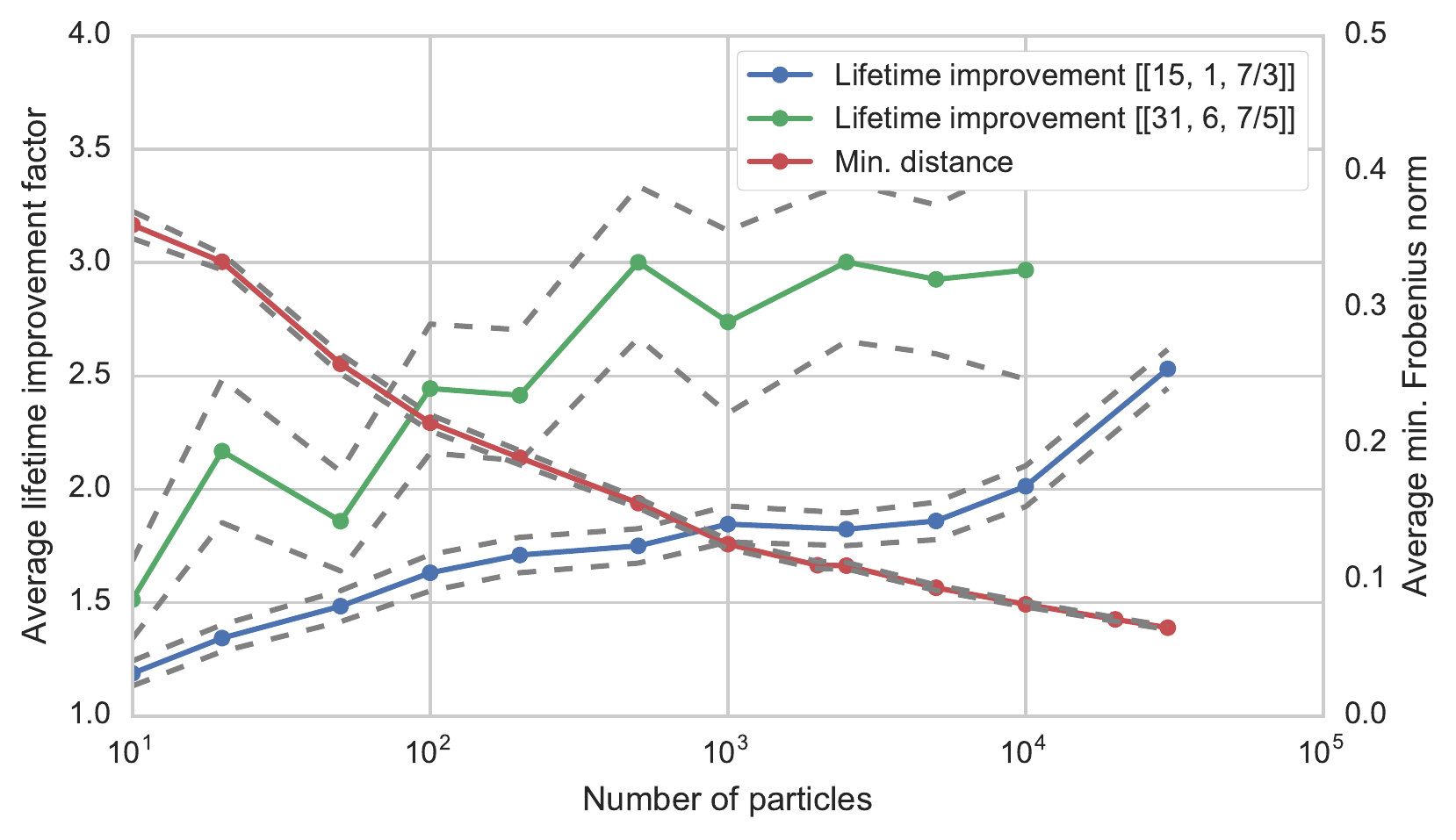}
  \caption{Effect of number of sampling points on lifetime factor for the $[[15, 1, 7/3]]$  and $[[31, 6, 7/5]]$ codes, and on the average Frobenius norm distance to the nearest grid element, for a random channel of eccentricity $(0.7, 0.2, 0.1)$. The dotted lines represent the standard error around the mean.}
  \label{fig:particle_effects}
\end{figure}

\section{Discussion}
\label{sec:discussion}

By adaptively changing the codespace and stabilizers of asymmetric error correcting codes, we've demonstrated an in-situ method for boosting their performance in the presence of asymmetric noise channels. Our method provides a constant factor improvement to the code lifetime in all cases, and in the case of fixed dephasing noise along an unknown axis even results in a higher ``effective" code distance. Since our control unitaries on the physical qubits are functions only of the syndrome statistics, our method is also able to track noise parameters that drift in time.

It is interesting to note that in the case of oriented Pauli channels, the performance of the code is dominated by our ability to learn the direction of the largest channel eccentricity. This allows us to align our code in a way that matches this eccentricity to $X$ errors, which our code is best equipped to correct. The gains we make by learning the orientation of the second largest eccentricity are much less significant, since the code's ability to correct $Y$ errors is only slightly worse than for $Z$ errors.

In addition to our method of adaptively rotating the codespace and stabilizers, one might also consider re-encoding the logical quantum state as information about the noise channel becomes more refined. For example, if the noise channel is discovered to be highly asymmetric for a sustained period of time, then one could convert the stabilizer code into one that tolerates fewer $Z$ and $Y$ errors and thereby save on the total number of physical qubits and stabilizer measurements. Several works~\cite{conversion1, guillaume, conversion2} already demonstrate how to convert between these codes fault-tolerantly. Furthermore, one might consider correcting for some types of errors less often, as is done in~\cite{asymmetric1} in the presence of a strongly biased channel.

It would seem that our method could also extend to unital channels in general. In this case, we would model the process matrix $M$ not with an eigenvalue decomposition but with a singular value decomposition and our task would be to estimate both the left and right bases of $M$. Equivalently, we could retain our oriented Pauli channel estimator and concatenate it with another unitary channel. This is similar to~\cite{ferrie} where the authors consider a Pauli channel in the standard basis concatenated with a unitary rotation. In both cases, a straightforward application of our randomized grid would incur a further $\e^3$ scaling to the bound on the number of sampling points in Eq.~\eqref{eqn:gridbound}, significantly driving up the minimum number of sampling points needed to yield lifetime improvements.  It might also be possible to go beyond unital channels, but this might require new methods to estimate the ``displacement'' of the noise channel.

Finally, a very natural extension of our method would be to consider separate noise channels and separate control unitaries for each qubit in the code. This case could be addressed by simply running parallel channel estimators for each qubit, with updates applied only for errors that occur on that qubit. Note that in this case the learning rate drops by $1/n$ on average, and each estimator converges much more slowly. However, it is likely that the channels would be correlated based on the qubit topology, an assumption which could be used to decrease the number of free parameters needed.

\begin{acknowledgements}
The authors acknowledge useful conversations with Kung-Chuan Hsu, Jos\'e Raul Gonzalez Alonso, Shengshi Pang and Christopher Cantwell.  This research was supported in part by the ARO MURI under Grant No. W911NF-11-1-0268; NSF Grant No. CCF-1421078; and an IBM Einstein Fellowship at the Institute for Advanced Study.
\end{acknowledgements}

\bibliographystyle{abbrv}
\bibliography{AllRefs}

\begin{appendices}
\section{Appendix}
\label{sec:lemmas}

\begin{lem}[Oriented Pauli channels as contractions]
\label{lem:Mrewrite}
Given the oriented Pauli channel $\Lambda_M$ (Def.~\ref{def:orientedpauli}), we can always write the process matrix $M$ as
\begin{equation}
	M = (1-2p) I + 2p Q^T_U \ls \begin{array}{ccc}
      k_1 & 0 & 0 \\
      0 & k_2 & 0 \\
      0 & 0 & k_3
    \end{array} \rs Q_U.
\end{equation}
\end{lem}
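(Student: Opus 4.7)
The plan is to unfold the definition of an oriented Pauli channel into Bloch matrices and then do a direct computation.

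First, I would write down the Bloch matrix of the Pauli channel $D$ sandwiched between the unitaries. A single-qubit Pauli channel $\mathcal{E}(\rho) = (1-p)\rho + p_x X\rho X + p_y Y\rho Y + p_z Z\rho Z$ acts on the Bloch vector coordinatewise, because $X\rho X$ flips the $y$ and $z$ components (leaves $x$), and similarly by cyclic inspection for $Y$ and $Z$. A short bookkeeping calculation therefore gives
\begin{equation}
  D \;=\; \mathrm{diag}\!\bigl(1 - 2(p_y+p_z),\; 1 - 2(p_x+p_z),\; 1 - 2(p_x+p_y)\bigr).
\end{equation}
Rewriting each diagonal entry as $1 - 2p + 2p_i$ with $p = p_x+p_y+p_z$, and factoring $p$ out by defining $k_i = p_i/p$ (so $k_1+k_2+k_3=1$), this becomes
\begin{equation}
  D \;=\; (1-2p)\,I \;+\; 2p\,\mathrm{diag}(k_1,k_2,k_3).
\end{equation}

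Second, I would invoke Definition~\ref{def:orientedpauli}, which states that $\Lambda_M$ is the composition of a unitary channel $\Lambda_U$, the Pauli channel $\Lambda_D$, and the inverse unitary $\Lambda_{U^\dagger}$. Because Bloch matrices compose by matrix multiplication in reverse order of channel application, this means $M = Q_U^T D Q_U$ for the real orthogonal rotation $Q_U \in SO(3)$ associated with $U$.

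Third, I would substitute the expression for $D$ into this product and use $Q_U^T Q_U = I$ to pull the identity piece through:
\begin{equation}
  M \;=\; Q_U^T\bigl[(1-2p)I + 2p\,\mathrm{diag}(k_1,k_2,k_3)\bigr]Q_U
  \;=\; (1-2p)\,I \;+\; 2p\, Q_U^T\,\mathrm{diag}(k_1,k_2,k_3)\,Q_U,
\end{equation}
which is exactly the claimed form. There is no real obstacle here; the only place one needs to be careful is the bookkeeping of signs in the action of each Pauli on the Bloch vector (so that the diagonal entries of $D$ come out correctly) and the orientation convention $M = Q_U^T D Q_U$ versus $Q_U D Q_U^T$, which must match Def.~\ref{def:orientedpauli} and the convention used in Eq.~\eqref{eqn:oriented_Pauli_m}.
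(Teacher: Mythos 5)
Your proposal is correct and follows essentially the same route as the paper's proof: compute the diagonal Bloch matrix of the Pauli channel (obtaining $(1-2p)I + 2p\,\mathrm{diag}(k_1,k_2,k_3)$), then conjugate by the orthogonal rotation associated with the unitary and pull the identity term through. If anything, you are slightly more explicit than the paper about the final composition step $M = Q_U^T D Q_U$, which the paper leaves as a one-line remark.
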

\begin{proof}
Recall that $\Lambda_M = \Lambda_{U^{\dag}} \circ \Lambda_D \circ \Lambda_{U}$. Consider the action of $\Lambda_D$ on a qubit state:
\begin{align}
	\Lambda_D(\rho)
	& = (1-p) \rho + p_x X \rho X + p_y Y \rho Y + p_z Z \rho Z \nonumber\\
	& = (1-p) \frac{I + \vec{r} \cdot \vec{\s}}{2} + p_x \frac{I + X\vec{r} \cdot \vec{\s}X}{2} \nonumber\\
	& \hspace{0.15in} + p_y \frac{I + Y\vec{r} \cdot \vec{\s}Y}{2} + p_z \frac{I + Z\vec{r} \cdot \vec{\s}Z}{2} \nonumber\\
	& = \frac{I}{2} + (1-p) \frac{\vec{r} \cdot \vec{\s}}{2} \nonumber\\
	& \hspace{0.25in} + (p_x - p_y - p_z) \frac{r_1 \s_x}{2} \nonumber\\
	& \hspace{0.25in} + (-p_x + p_y - p_z) \frac{r_2 \s_y}{2} \nonumber\\
	& \hspace{0.25in} + (-p_x - p_y + p_z) \frac{r_3 \s_z}{2} \nonumber\\
	& = \frac{I}{2} + (1-2p) \frac{\vec{r} \cdot \vec{\s}}{2} \nonumber\\
	& \hspace{0.25in} + \frac{2p_xr_1 \s_x + 2p_yr_2 \s_y + 2p_zr_3 \s_z}{2} \nonumber\\
	& = \frac{I + \lp \ls A \rs \vec{r} \rp \cdot \vec{\s}}{2},
\end{align}
where
\begin{equation}
A = (1-2p) I + 2p \ls \begin{array}{ccc}
      k_1 & 0 & 0 \\
      0 & k_2 & 0 \\
      0 & 0 & k_3
    \end{array} \rs
\end{equation}
where $p_x = pk_1$, $p_y=pk_2$ and $p_z=pk_3$. It remains only to identify that $\Lambda_U$ and $\Lambda_{U^{\dag}}$ act as transformations $Q_U$ and $Q_{U^{\dag}}$ on $\vec{r}$ before and after the channel $\Lambda_D$ to complete the proof.
\end{proof}

\begin{lemmaproof}[1][Number of sampling points versus minimum distance]
Let $X_i$ be $N$ i.i.d. copies of randomly oriented Pauli channels according to~\eqref{eqn:randomorientedpauli}, and let $A$ be a fixed channel with eccentricities $(a_1, a_2, a_3)$. Then we have that
\begin{equation}
  \Pr \lb Z < \e \rb \geq 1 - \lp 1 - \frac{\e^5}{2^{12}\sqrt{2}3^3 a_1^2a_2a_3} \rp^N, 
  \label{eqn:gridbound}
\end{equation}
where $Z=\min_i \left\| X_i - A \right\|_2$ and $\left\| Y \right\|_2 = \sqrt{\Tr \ls Y^T Y \rs}$ is the Frobenius norm.
\end{lemmaproof}
\begin{proof}
Consider the variable $Z \sim \min_i \left\| X_i - A \right\|_2$. The cumulative distribution function of $Z$ is
\begin{align}
  \Pr \lb Z < \e \rb
  & = 1 - \prod_i  \lp 1 - \Pr \lb \left\| X_i - A \right\|_2 < \e \rb \rp \nonumber\\
  & = 1 - \lp 1 - \Pr \lb \left\| X - A \right\|_2 < \e \rb \rp^N \label{eqn:mindist}, 
\end{align}
and let
\begin{align}
  \mu(X)
  & = \left\| X - A \right\|_2 \nonumber\\
  & = \left\| Q_A^TXQ_A - D_A \right\|_2 \nonumber\\
  & = \left\| Q_{XA}^TD_XQ_{XA} - D_A \right\|_2, 
\end{align}
where in the second line we've conjugated both matrices by $Q_A$ and $Q_{XA} = Q_XQ_A$ is also Haar random~\cite{simon1996representations}, since the Haar measure is invariant under left or right conjugation. Let $E_D = \lb \left| D_X - D_A \right| \leq \e_D \rb$ where $\left| \cdot \right|$ is the absolute value, i.e., the event in which the eigenvalues of $X$ and $A$ are similar. Conditioning on this event we see that,
\begin{align}
  \lefteqn{\Pr \lb \mu(X) \leq \e \rb} \nonumber\\
  & = \Pr \lb \mu(X) \leq \e | E_D \rb \Pr \lb E_D \rb \nonumber\\
  & \hspace{0.15in} + \Pr \lb \mu(X) \leq \e | \neg E_D \rb \Pr \lb \neg E_D \rb \nonumber\\
  & \geq \Pr \lb \mu(X) \leq \e | E_D \rb \Pr \lb E_D \rb .
\end{align}
Thus, we can consider only those channels $X$ for which $E_D$ is true. Let $\{\vec{q}_i\}$ be the column vectors of $Q_{XA}$, and let $x_i$ and $a_i$ be the diagonals of $D_X$ and $D_A$, respectively. From $E_D$ we know that $|x_i - a_i| \leq \e_D$. The Frobenius norm can then be simplified to be
\begin{align}
  \lefteqn{\left\| Q_{XA}^TD_XQ_{XA} - D_A \right\|_2} \nonumber\\
  & = \left\| \sum_{i=1}^3 \lp x_i \vec{q}_i \vec{q}_i^T  - a_i \vec{e}_i \vec{e}_i^T \rp \right\|_2 \nonumber\\
  & \leq \sum_{i=1}^3 \left\| x_i \vec{q}_i \vec{q}_i^T - a_i \vec{e}_i \vec{e}_i^T \right\|_2 \nonumber\\
  & \leq \sum_{i=1}^3 \lp a_i \left\|\vec{q}_i \vec{q}_i^T - \vec{e}_i \vec{e}_i^T \right\|_2  + \e_D \rp \nonumber\\
  & \leq \sum_{i=1}^3 \lp \sqrt{2} a_i \left\| \vec{q}_i - \vec{e}_i \right\|_2 + \e_D \rp .
\end{align}
where in the second line we've used the convexity of the Frobenius norm, in the third line we've used previous our bounds on the distance between $x_i$ and $a_i$, and in the fourth line we've used the result from Lemma~\ref{lem:frobnorm}. Let $A_i = \lb \sqrt{2} a_i \left\| \vec{q}_i - \vec{e}_i \right\|_2 + \e_D < \e/3 \rb$.  Then we must have
\begin{align}
  \lefteqn{\Pr \lb \mu(X) < \e | E_D \rb} \nonumber\\
  & \geq \Pr \lb A_1 \cap A_2 \cap A_3 \rb \nonumber\\
  & = \Pr \lb A_1 \rb \Pr \lb A_2 | A_1 \rb \Pr \lb A_3 | A_1 \cap A_2 \rb .
\end{align}
Note that the marginal distribution of each $\vec{q}_i$ is uniform over the unit sphere and can be constructed using~\cite{graphicsgems},
\begin{equation}
  \vec{q} = \ls \begin{array}{c}
    u \\
    \sqrt{1-u^2} \sin \lp 2 \pi v \rp \\
    \sqrt{1-u^2} \cos \lp 2 \pi v \rp
  \end{array} \rs
  \hspace{0.15in}
  \lb \begin{array}{l}
    u \sim \text{unif} \ls -1, 1 \rs , \\
    v \sim \text{unif} \ls 0, 1 \rs .
  \end{array} \right.
\end{equation}
Thus, for $\vec{q}_1$ to lie within $\e$ of $\vec{e}_1$ in the Frobenius norm, we need only that $u \in \ls 1 - \e^2/2, 1 \rs$, and for $u$ uniformly distributed over $\ls -1, 1 \rs$ this means that
\begin{equation}
  \Pr \lb \left\| \vec{q}_1 - \vec{e}_1 \right\|_2 \leq \e \rb = \e^2/2 .
\end{equation}
Next, the vector $\vec{q}_2$ is sampled uniformly from the equator of vectors orthogonal to $\vec{q}_1$. Given that $\left\| \vec{q}_1 - \vec{e}_1 \right\|_2 \leq \e$, we are guaranteed that $\vec{e}_2$ lies in the band of latitude $\e$ around this equator. Thus, the probability of $\vec{q}_2$ lying within $\e$ of $\vec{e}_2$ is given by the ratio of the area of the circle of radius $\e$ on the unit sphere to that of the band. For $\e \ll 1$ this is,
\begin{equation}
  \Pr \lb \left\| \vec{q}_2 - \vec{e}_2 \right\|_2 \leq \e \rb = \frac{\int_0^{\e} 2\pi r dr}{\int_{1 - \e}^{1 + \e} 2\pi r dr} = \frac{\e}{2} . \\
\end{equation}
Finally, given that $\vec{q}_1$ and $\vec{q}_2$ each lie within $\e$ of $\vec{e}_1$ and $\vec{e}_2$ respectively, the vector $\vec{q}_3$ lies within $\e$ of $\vec{e}_3$ with probability $1/4$. Altogether,
\begin{align*}
  \lefteqn{\Pr \lb \left\| \vec{q}_1 - \vec{e}_1 \right\|_2 \leq \e \rb = \e^2/2} \\
  & \Longrightarrow & 
  \Pr \lb A_1 \rb = \frac{1}{8a_1^2}\lp \e/3 - \e_D \rp^2,  \\
  \lefteqn{\Pr \lb \left\| \vec{q}_2 - \vec{e}_2 \right\|_2 \leq \e \rb = \e/2} \\
  & \Longrightarrow &
  \Pr \lb A_2 | A_1 \rb = \frac{1}{4a_2} \lp \e/3 - \e_D \rp,  \\
  \lefteqn{\Pr \lb \left\| \vec{q}_3 - \vec{e}_3 \right\|_2 \leq \e \rb = 1/4} \\
  & \Longrightarrow &
  \Pr \lb A_3 | A_1 \cap A_2 \rb = \frac{1}{2a_3} \lp 1/4 - \e_D \rp .
\end{align*}
If we choose $\e_D = \e/6$, then we have altogether that
\begin{equation}
  \Pr \lb \mu(X) < \e | E_D \rb \geq \frac{\e^3}{2^{10}\sqrt{2}3^3 a_1^2a_2a_3}. 
\end{equation}

It now remains to bound $\Pr \lb E_D \rb$ to finish the proof. The event $E_D$ is characterized by the probabilities,
\begin{align}
  \Pr \lb \left| X_1 - a_1 \right| \leq \e \rb
  & \geq \e,  \\
  \Pr \lb \left| X_2 - a_2 \right| \leq \e | \left| X_1 - a_1 \right| \leq \e \rb
  & \geq \e .
\end{align}
Finally, if $|X_1 - a_1| \leq \e$ and $|X_2 - a_2 | \leq \e$, then we are guaranteed that $|X_3 - a_3| \leq 2\e$. Putting this all together, we can use the chain rule to see that
\begin{equation}
  \Pr \lb \left| D_X - D_A \right| \leq \e \rb \geq (\e/2)^2 .
\end{equation}
Thus, we have our bound on the spacing of a random channel $X$:
\begin{equation}
  \Pr \lb \mu(X) < \e \rb \geq \frac{\e^5}{2^{10}\sqrt{2}3^3 a_1^2a_2a_3}.
\end{equation}

In our bound on the minimum distance Eq.~\eqref{eqn:mindist} this becomes
\begin{equation}
  \Pr \lb Z < \e \rb \geq 1 - \lp 1 - \frac{\e^5}{2^{10}\sqrt{2}3^3 a_1^2a_2a_3} \rp^N.
\end{equation}
\end{proof}

\begin{lem}[Frobenius distance of outer products]
\label{lem:frobnorm}
Given $\vec{x}, \vec{y} \in \mbR^n$, where $\vec{x}$ and $\vec{y}$ are unit vectors, we can bound the Frobenius distance of their outer product by the Frobenius distance of the vectors themselves,
\begin{equation}
	\left\| \vec{x} \vec{x}^T - \vec{y} \vec{y}^T \right\|_2 \leq 2 \left\| \vec{x} - \vec{y} \right\|_2 .
\end{equation}
\end{lem}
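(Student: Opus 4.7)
My plan is to reduce the outer product difference to a sum of two rank-one matrices whose Frobenius norms are each exactly $\|\vec{x}-\vec{y}\|_2$, then apply the triangle inequality. The key algebraic move is to add and subtract a single mixed outer product, which cleanly separates the difference along the two factors.

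Concretely, I would write
\begin{equation}
\vec{x}\vec{x}^T - \vec{y}\vec{y}^T = \vec{x}(\vec{x}-\vec{y})^T + (\vec{x}-\vec{y})\vec{y}^T,
\end{equation}
which one verifies by expanding: the cross term $\vec{x}\vec{y}^T$ cancels. Applying the triangle inequality for the Frobenius norm then gives
\begin{equation}
\left\|\vec{x}\vec{x}^T - \vec{y}\vec{y}^T\right\|_2 \leq \left\|\vec{x}(\vec{x}-\vec{y})^T\right\|_2 + \left\|(\vec{x}-\vec{y})\vec{y}^T\right\|_2.
\end{equation}

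Next I would invoke the standard identity $\|\vec{u}\vec{v}^T\|_2 = \|\vec{u}\|_2 \|\vec{v}\|_2$, which follows from $\|\vec{u}\vec{v}^T\|_2^2 = \Tr(\vec{v}\vec{u}^T\vec{u}\vec{v}^T) = (\vec{u}^T\vec{u})(\vec{v}^T\vec{v})$. Using this identity on each rank-one term, together with the hypothesis that $\|\vec{x}\|_2 = \|\vec{y}\|_2 = 1$, each of the two summands equals exactly $\|\vec{x}-\vec{y}\|_2$, giving the desired bound of $2\|\vec{x}-\vec{y}\|_2$.

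There is no real obstacle here; the only thing to be careful about is the order of factors in the splitting (so that the cancellation actually happens) and the use of the unit-norm hypothesis at the very end. A sharper constant of $\sqrt{2}$ is in fact attainable by a spectral argument (since $\vec{x}\vec{x}^T - \vec{y}\vec{y}^T$ has rank at most two with opposite-sign eigenvalues), but the constant $2$ produced by this elementary splitting is all that the earlier estimate leading up to Lemma~\ref{lem:gridsize} requires, so I would not pursue the tighter bound.
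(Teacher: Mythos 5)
Your proof is correct, but it follows a different route from the paper's. You use the telescoping decomposition $\vec{x}\vec{x}^T - \vec{y}\vec{y}^T = \vec{x}(\vec{x}-\vec{y})^T + (\vec{x}-\vec{y})\vec{y}^T$, the triangle inequality, and the rank-one identity $\left\| \vec{u}\vec{v}^T \right\|_2 = \left\| \vec{u} \right\|_2 \left\| \vec{v} \right\|_2$, which delivers the stated constant $2$ with almost no computation and generalizes immediately (unit norms are only used in the last line). The paper instead expands $\left\| \vec{x}\vec{x}^T - \vec{y}\vec{y}^T \right\|_2^2$ directly as a trace, obtaining $2\lp 1 - \langle \vec{x},\vec{y}\rangle^2 \rp \leq 4\lp 1 - \langle \vec{x},\vec{y}\rangle \rp = 2\left\| \vec{x}-\vec{y} \right\|_2^2$ (the displayed intermediate line has a typo, $\sqrt{2}$ where $2$ is meant), i.e.\ the sharper constant $\sqrt{2}$ that you correctly note is attainable. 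One caveat on your closing remark: the paper does in fact invoke the $\sqrt{2}$ constant, not $2$, in the chain of inequalities in the proof of Lemma~\ref{lem:gridsize} (the term $\sqrt{2}\, a_i \left\| \vec{q}_i - \vec{e}_i \right\|_2$), so substituting your version of the lemma as stated would require adjusting that step; the change is harmless for the final scaling but affects the explicit constants.
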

\begin{proof}
Recall that for any real matrix $A$, $\left\| A \right\|_2 = \sqrt{\Tr \ls A A^T \rs}$, thus
\begin{align}
	\left\| \vec{x} \vec{x}^T - \vec{y} \vec{y}^T \right\|^2_2
	& = \Tr \ls \vec{x} \vec{x}^T \vec{x} \vec{x}^T - \vec{x} \vec{x}^T \vec{y} \vec{y}^T \right. \nonumber\\
	& \hspace{0.35in} \left. - \vec{y} \vec{y}^T \vec{x} \vec{x}^T + \vec{y} \vec{y}^T \vec{y} \vec{y}^T \rs \nonumber\\
	& = \langle \vec{x}, \vec{x} \rangle^2 + \langle \vec{y}, \vec{y} \rangle^2 - 2 \langle \vec{x}, \vec{y} \rangle^2 \nonumber\\
	& = \sqrt{2} \lp 1 - \langle \vec{x}, \vec{y} \rangle^2 \rp \nonumber\\
	& \leq 2\sqrt{2} \lp 1 - \langle \vec{x}, \vec{y} \rangle \rp \nonumber\\
	& = \sqrt{2} \left\| \vec{x} - \vec{y} \right\|_2^2 .
\end{align}
\end{proof}

\begin{lem}[Bound on diagonal matrix elements]
\label{lem:diagelts}
Let $D_A$ and $D_B$ be diagonal matrices of entries $a_i$ and $b_i$ respectively. Let $X = Q^TD_AQ$ for some orthogonal matrix $Q$ with columns $\vec{q}_i$. We then have that
\begin{equation}
	\left| \ls X \rs_{ii} - b_i \right| \leq \left\| X - D_B \right\|_2 .
\end{equation}
\end{lem}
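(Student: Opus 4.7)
The plan is to observe that this bound is essentially a direct consequence of the definition of the Frobenius norm, and does not actually use the structural hypothesis $X = Q^T D_A Q$ beyond the fact that $X$ is a matrix whose $(i,i)$ entry is $[X]_{ii}$. The key identity is that the Frobenius norm of a matrix dominates the absolute value of any single entry, since
\begin{equation}
\|Y\|_2^2 \;=\; \sum_{j,k} |Y_{jk}|^2 \;\geq\; |Y_{ii}|^2
\end{equation}
for any indices $i$.

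First I would apply this with $Y = X - D_B$. Because $D_B$ is diagonal with entries $b_i$, its $(i,i)$ entry is $b_i$, so $[X - D_B]_{ii} = [X]_{ii} - b_i$. Substituting into the inequality above and taking square roots gives
\begin{equation}
\bigl|[X]_{ii} - b_i\bigr| \;\leq\; \|X - D_B\|_2,
\end{equation}
which is the claim.

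There is essentially no obstacle here: the argument is three lines. The only thing worth remarking on is that the hypothesis $X = Q^T D_A Q$ is not used, so the statement is actually true for any real matrix $X$ (the spectral structure becomes relevant only when this lemma is invoked in the proof of Lemma~\ref{lem:gridsize}, where $[X]_{ii}$ is identified with an eccentricity-like quantity via $\vec{q}_i^T D_A \vec{q}_i$). I would keep the proof self-contained as a one-line consequence of the Frobenius norm definition rather than introducing that extra information.
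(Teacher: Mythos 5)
Your proof is correct and follows essentially the same route as the paper's: both bound the single diagonal entry $\left|[X]_{ii}-b_i\right|$ by the full sum of squared entries defining the Frobenius norm of $X-D_B$. Your added remark that the hypothesis $X=Q^TD_AQ$ is never used is accurate --- the paper's proof does not use it either.
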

\begin{proof}
First, recall the form of the Frobenius norm for any matrix $Y$:
\begin{equation}
	\left\| Y \right\|_2 = \sqrt{\displaystyle\sum_{i=1}^n \displaystyle\sum_{j=1}^n \left| y_{ij} \right|^2} .
\end{equation}
In our case, this becomes
\begin{align}
	\left\| X - D_B \right\|_2
	& = \sqrt{\displaystyle\sum_{i=1}^n \displaystyle\sum_{j=1}^n \left| \ls X \rs_{ij}-b_i\d_{ij} \right|^2} \nonumber\\
	& \geq \sqrt{\left| \ls X \rs_{ii}-b_i \right|^2} \nonumber\\
	& = \left| \ls X \rs_{ii} - b_i \right| .
\end{align}

\end{proof}

\end{appendices}

\end{document}